\documentclass[letterpaper,10pt,conference]{ieeeconf}
\IEEEoverridecommandlockouts
\usepackage{amsmath,amssymb,amsfonts}
\usepackage{cite}
\usepackage{algorithmic}
\usepackage{graphicx}
\usepackage{textcomp}
\usepackage{xcolor}
\usepackage{caption}
\usepackage{verbatim}

\newtheorem{theorem}{Theorem}

\newcommand{\bmtx}{\begin{bmatrix}}
\newcommand{\emtx}{\end{bmatrix}}
\newcommand{\bsmtx}{\left[ \begin{smallmatrix}} 
\newcommand{\esmtx}{\end{smallmatrix} \right]}
\newcommand{\field}[1]{\mathbb{#1}}
\newcommand{\R}{\field{R}}
\newcommand{\N}{\field{N}}
\newcommand{\pjs}[1]{{\color{red}{(PJS: #1)}}}

\renewcommand{\R}{\mathbb{R}}
\newcommand{\mcl}[1]{\mathcal{#1}}
\newcommand{\mbf}[1]{\mathbf{#1}}

\def\BibTeX{{\rm B\kern-.05em{\sc i\kern-.025em b}\kern-.08em
    T\kern-.1667em\lower.7ex\hbox{E}\kern-.125emX}}

\begin{document}
\title{\LARGE \bf Time-Transformation-Based Analysis of Systems \\ with Periodic Delay via Perturbative Expansion}

\author{Jungbae Chun$^{1}$, Sengiyumva Kisole$^{2}$, Matthew M. Peet$^{2}$, and Peter Seiler$^{1}$
	\thanks{$^{1}$J. Chun and P. Seiler are with the Department of Electrical Engineering \& Computer Science at the University of Michigan ({\tt\small jungbaec@umich.edu} and
		{\tt\small pseiler@umich.edu}). $^{2}$S. Kisole and M. M. Peet are with the School for the Engineering of Matter, Transport and Energy at Arizona State University ({\tt \small sengi.kisole@asu.edu} and {\tt\small mpeet@asu.edu}). 
  }
}

\maketitle

\begin{abstract}
It is difficult to analyze the stability of systems with time-varying delays.  One approach is to construct a time-transformation that converts the system into a form with a constant delay but  with a time-varying scalar appearing in the system matrices.  The stability of this transformed system can then be analyzed using methods to bound the effect of the time-varying scalar.  One issue is that this transformation is non-unique and requires the solution of an Abel equation.  A specific time-transformation typically must be computed numerically. We address this issue by computing an explicit, although approximate, time-transformation for systems where the delay has a constant plus small periodic term.  We use a perturbative expansion to construct our explicit solutions.  We provide a simple numerical example to illustrate the approach. We also demonstrate the use of this time-transformation to analyze stability of the system with this class of periodic delays.

\end{abstract}


\section{Introduction}

Systems with time-varying delays are challenging to analyze. 
This includes issues associated with causality (when the delay
is fast varying), existence and uniqueness, and minimality  of the state representation. Moreover, systems with time-varying delays,
unlike systems with constant delays, lack a fixed notion of state space\cite{verriest2010well, verriest2011inconsistencies, verriest2012SS1}. Furthermore, there are fewer tools to analyze stability and performance of systems with time-varying delays.
This is in contrast to systems with constant delays for which there are well established tools, including Integral Quadratic Constraints (IQCs), Linear Matrix Inequalities (LMIs), and Partial Integral Equations (PIEs).

These issues motivate the use of a time-transformation to convert a system with a time-varying delay into an equivalent (transformed) system with a constant delay \cite{heard1975change, neuman1990transformation, vcermak1995continuous}. A transformation $h$ can be constructed, under some technical assumptions, to map the original time to a new time variable. The new transformed system has a constant delay but a time-varying scalar factor appears in the state matrices.

This construction of an appropriate time-transformation requires solving an Abel equation. Existing methods rely on two steps~\cite{nah2020normalization, brunner2009time1}. 
First, a seed function is specified on an initial interval. Next, the time transformation is propagated forward, recursively, from this seed function.  We review this construction in Section~\ref{sc:probform}. This approach often requires numerical methods and an explicit solution for a time transformation can be found only for special classes of delays. Also note that time transformations are not unique, i.e., different seed functions lead to different valid time-transformations.  A complete parameterization of all possible seed functions is given in \cite{Kisole2025parameterization}.  

The main contribution of this paper is to provide an explicit, although approximate, time-transformation for a certain class of delays. Specifically, we consider systems where the delay consists of a constant plus a periodic term with a small amplitude $\epsilon$. Such explicit approximate solutions are valuable for certain applications, e.g.,  the suppression of regenerative chatter in turning and milling processes. In these systems, a deliberate periodic variation of the spindle speed around its nominal value introduces a small periodic component into the regenerative delay. This modulation can significantly enlarge the region of stable operation and, in some cases, stabilize systems that would otherwise be unstable under a constant delay. Theoretical and experimental studies have confirmed this stabilizing effect \cite{michiels2005stabilization, insperger2004stability, seguy2010stability, zatarain2008stability}. We use a perturbative expansion in $\epsilon$ to construct first- and second-order approximations for the time-transformation. These main results are presented in Section~\ref{sc:PE} with a simple example given in Section~\ref{sc:PEsindelay}. We also illustrate the use of these approximate time-transformations for stability analysis
(Sections~\ref{sc:pipeline} and \ref{sec:stabnumex}).  This analysis uses the Partial Integral Equation (PIE) representation \cite{peet2021representation} and PIETOOLS \cite{shivakumar2025pietools}. Conclusions and future work are given in Section~\ref{sc:concfw}.

\subsection*{Notation}
$\R$, $\R_{+}$, and $\N$ denote
the sets of real, (strictly) positive real, and natural numbers. $C(A,B)$ denotes the set of functions $f:A \to B$ that are continuous and $C^k(A,B)$ denotes the set of $k$-times continuously differentiable functions for $k \in \N$. For $[a, b] \subseteq \R$, $L_2^n[a, b]$ denotes the space of functions $f:[a, b] \to \R^n$ satisfying $\|f\| < \infty$ where 
$\|f\|:= \left[\int_a^b f(s)^{\top}f(s)ds \right]^{0.5}$.
The composition of two functions $f$ and $g$ is defined as $(f \circ g)(x) = f(g(x))$. For $k \in \N$, $f^k$ denotes the $k$-fold composition of $f$. For $p, q \in \R$, we denote $p \ll q$ to indicate that $p$ is much less than $q$, i.e., $p/q$ is assumed to be sufficiently small. We denote $f_1(\epsilon) = O(f_2(\epsilon))$, if there exists $M \in \R_+$ such that $|f_1(\epsilon)| \leq M |f_2(\epsilon)|$ for $\epsilon$ sufficiently small.


\section{Problem Formulation and Preliminaries}
\label{sc:probform}
Consider a linear Delay Differential Equation (DDE):
\begin{align}
\label{eq:DDEwithTVdelay}
    \begin{split}
	&\dot{x}(t) = A_0 x(t) + A_1 x(t-\tau(t)), \, \forall t \geq 0, \\
	&x(t) = x_0(t), \, \forall t \in [-\tau(0), 0],
    \end{split}
\end{align}
where $x(t)\in \R^{n_x}$ is the state vector at time $t$, $A_0 \in \R^{n_x \times n_x}$ and $A_1 \in \R^{n_x \times n_x}$ are state matrices, and $x_0 \in C([-\tau(0),0], \R^{n_x})$ is the initial function. The delay $\tau \in C^1(\R, \R_{+})$ is assumed to be bounded with $\dot{\tau}(t) < 1$ for all $t \in \R$. Equation \eqref{eq:DDEwithTVdelay} has a time-varying delay but constant state matrices. We can construct a time-transformation to convert this to a DDE with a constant delay, but with a time-varying factor in the state matrices. This is stated precisely in the next theorem. This result is an extended version of Theorem 1 in \cite{nah2020normalization}, which covers one direction of the equivalence. The proof for both direction is given in Lemma 1 of \cite{Kisole2025parameterization}.
\begin{theorem}
\label{Theorem:time-transformation}
Let $\tau \in C^1(\R, \R_{+})$ be a bounded function with $\dot{\tau}(t) < 1$ for all $t \in \R$. For a given $\tau^* \in \R_{+}$, there exists a strictly increasing, invertible function $h \in C^1([-\tau^*, \infty), \R)$ which satisfies 
\begin{align}
\label{eq:hcond1}
	&h(\lambda) - \tau(h(\lambda)) = h(\lambda - \tau^*), \, \forall \lambda \geq 0,\\
\label{eq:hcond2}
	&h(0) = 0.
\end{align}
Moreover, $x$ satisfies \eqref{eq:DDEwithTVdelay} for a given initial function $x_0 \in C([-\tau(0),0], \R^{n_x})$ if and only if $\bar{x} = x \circ h$ satisfies
\begin{align}
\label{eq:transformed DDE}
    \begin{split}
        &\dot{\bar{x}}(\lambda) = \dot{h}(\lambda) A_0 \bar{x}(\lambda) + \dot{h}(\lambda) A_1 \bar{x}(\lambda - \tau^*), \, \forall \lambda \geq 0,\\
	&\bar{x}(\lambda) = \bar{x}_0(\lambda), \, \forall \lambda \in [-\tau^*, 0], 
    \end{split}
\end{align}
where $\bar{x}_0 = x_0 \circ h \in C([-\tau^*,0], \R^{n_x})$ and $\lambda = h^{-1}(t)$.
\end{theorem}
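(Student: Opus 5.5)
The plan has two parts: first construct $h$ by the standard seed-and-propagate method, then verify the equivalence by the chain rule.

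\textbf{Construction of $h$.} Define $g(t) := t - \tau(t)$. Since $\dot g(t) = 1 - \dot\tau(t) > 0$, $g$ is strictly increasing and $C^1$. Since $\tau$ is bounded, $g(t)\to\pm\infty$ as $t\to\pm\infty$. Hence $g:\R\to\R$ is a $C^1$ bijection with $C^1$ inverse $g^{-1}$, and $g^{-1}(t) > t$ because $\tau>0$. Condition \eqref{eq:hcond1} reads $g(h(\lambda)) = h(\lambda-\tau^*)$, i.e.
\begin{align*}
h(\lambda) = g^{-1}\big(h(\lambda-\tau^*)\big), \quad \lambda \geq 0 .
\end{align*}
The steps are as follows.
\begin{itemize}
\item Pick a seed $\phi$ on $[-\tau^*,0]$ with $\phi(0)=0$ and $\phi(-\tau^*) = g(0) = -\tau(0)$. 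Require $\phi$ to be $C^1$ and strictly increasing, with endpoint derivatives matched so that the propagated function is $C^1$ across the junctions: $\dot\phi(0) = \dot\phi(-\tau^*)/\dot g(0)$, obtained by differentiating the recursion at $\lambda=0$.
\item One concrete choice is an affine-plus-correction seed, e.g.\ a cubic Hermite interpolant. Positivity of its derivative can be arranged by choosing both endpoint slopes positive, for instance $\dot\phi(-\tau^*) = c\,\dot g(0)$ and $\dot\phi(0)=c$ with $c = \tau(0)/\tau^*$, and checking monotonicity. If a Hermite cubic fails to be monotone, use a monotone $C^1$ interpolant instead, which always exists for increasing data with positive end slopes.
\item Propagate: on $[k\tau^*,(k+1)\tau^*]$ set $h = g^{-(k+1)}\circ\phi(\cdot - (k+1)\tau^*)$.
\end{itemize}
By induction, $h$ is $C^1$ and strictly increasing on each interval, with values and derivatives matching at the junctions; the derivative matching propagates from the seed condition by the chain rule. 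Continuity of values holds by construction, so \eqref{eq:hcond1}--\eqref{eq:hcond2} hold.

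\textbf{Range and invertibility.} For $h$ to be invertible onto $[-\tau(0),\infty)$, I need $h(\lambda)\to\infty$. This follows because $g^{-1}(t)\geq t+\inf\tau$ when $\inf \tau>0$. If only $\tau>0$ is assumed, I would argue instead that the iterates $g^{-k}(0)$ are increasing, and if bounded they converge to a fixed point $t^\ast$ with $\tau(t^\ast)=0$, which is a contradiction.

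\textbf{Equivalence.} Set $\bar x = x\circ h$. The chain rule gives
\begin{align*}
\dot{\bar x}(\lambda) = \dot h(\lambda)\,\dot x(h(\lambda)) = \dot h(\lambda)\big[A_0 x(h(\lambda)) + A_1 x\big(h(\lambda)-\tau(h(\lambda))\big)\big],
\end{align*}
and by \eqref{eq:hcond1} the last argument equals $h(\lambda-\tau^*)$, which yields \eqref{eq:transformed DDE}. The initial data correspond because $h$ maps $[-\tau^*,0]$ onto $[-\tau(0),0]$. Conversely, given $\bar x$ solving \eqref{eq:transformed DDE}, set $x=\bar x\circ h^{-1}$ and use $\dot h>0$ to divide out $\dot h$ in the same computation evaluated at $\lambda = h^{-1}(t)$.

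\textbf{Main obstacle.} The main difficulty is the $C^1$ seed construction with a positive derivative that satisfies the compatibility condition; everything else is routine chain-rule bookkeeping.
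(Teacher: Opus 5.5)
Your proposal is correct and follows the same route the paper relies on through its cited references. You choose a $C^1$ seed $\phi$ on $[-\tau^*,0]$ with $\phi(0)=0$, $\phi(-\tau^*)=-\tau(0)$, positive derivative and the slope-matching condition $\dot\phi(0)=\dot\phi(-\tau^*)/(1-\dot\tau(0))$, propagate it forward through $g^{-1}$ as in \eqref{eq:construct}, and get the equivalence from the chain rule together with \eqref{eq:hcond1}; your explicit argument that $h(\lambda)\to\infty$, so that $h$ maps $[0,\infty)$ onto $[0,\infty)$ as the converse direction requires, is a detail the paper leaves to the references, and you handle it correctly.
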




Equation \eqref{eq:hcond1} is an Abel equation. The solution $h$ is an invertible function that transforms between the original time $t$ and the new time measure $\lambda = h^{-1}(t)$. Note that $\dot{\tau}(t) < 1$ together with $\tau$ bounded implies that the function $g(t) = t - \tau(t)$ is strictly increasing and $g(\R) = \R$. Hence, the inverse function $g^{-1}:\R \rightarrow \R$ is well-defined and strictly increasing. It was shown in \cite{nah2020normalization} that such a time-transformation $h: [-\tau^*, \infty) \rightarrow \R$ can be constructed in two steps. First, a function $\phi \in C^1([-\tau^*, 0], \mathbb{R})$ is constructed to satisfy:
\begin{align}
\label{eq:condition on phi}
\begin{split}
    & \phi(0) = 0, \\
    & \phi(-\tau^*) = -\tau(0), \\
    & \dot{\phi}(0) = \frac{\dot{\phi}(-\tau^*)}{1 - \dot{\tau}(0)}, \\
    & \dot{\phi}(\lambda) > 0, \, \forall \lambda \in [-\tau^*, 0].
\end{split}
\end{align}
Next, the time-transformation $h: [-\tau^*, \infty) \rightarrow \R$ is constructed by propagating forward from the function $\phi$:
\begin{align}
\label{eq:construct}
h(\lambda) =
    \begin{cases}
	\phi(\lambda) & \mbox{ for } \lambda \in [-\tau^*, 0), \\
	g^{-1}(h(\lambda-\tau^*)) & \mbox{ for } \lambda \geq 0.
    \end{cases}
\end{align}
From \eqref{eq:construct}, it follows that $\forall k \in \N$,
\begin{align}
\label{eq:numerical h constructed by induction}
h(\lambda) = (g^{-1})^k(\phi(\lambda - k\tau^*)) \mbox{ for } \lambda \in [(k-1)\tau^*, k\tau^*).
\end{align}
Theorem~\ref{Theorem:time-transformation} states that such a transformation exists, but  the conditions \eqref{eq:condition on phi} do not uniquely specify $\phi$, referred to as the seed function. Hence the time-transformation is not unique. A complete parameterization of seed functions is given in \cite{Kisole2025parameterization}.
In our paper, we provide one explicit, although approximate, time-transformation for a class of periodic delays. 

\section{Main result: Perturbative expansions for $h$}
\label{sc:PE}
A closed-form analytic expression for $h$ cannot be obtained from \eqref{eq:numerical h constructed by induction} unless the exact closed-form expression of $g^{-1}$ corresponding to the given $\tau(t)$ is known. As a consequence, the factor $\dot{h}(\lambda)$ in the transformed system \eqref{eq:transformed DDE} cannot be computed analytically. The remainder of this section presents one explicit, although approximate, choice for $h$. This expression is obtained for systems where the delay is a constant plus small periodic term of the form:
\begin{align}
\label{eq:tau with perturbation}
\tau(t) = \tau_0 + \epsilon \tilde{\tau}(t).
\end{align}
We assume $\tilde{\tau} \in C^1(\R,[-1,1])$ is a periodic function with fundamental period $P \in \R_{+}$. Its Fourier series is given by:
\begin{align}
\label{eq:tauFS}
\tilde{\tau}(t) = \sum_{k = -\infty}^{\infty} a_k e^{jk\omega t},
\end{align}
where $\omega = \frac{2\pi}{P}$ and $a_k = \frac{1}{P} \int_0^P \tilde{\tau}(t)e^{-jk\omega t}dt$. The approximate time-transformation is obtained via perturbative analysis with both first- and second-order expansions in $\epsilon$.

\subsection{First-Order Expansion}
\label{PE:first-order}
Our first main result is an approximate solution for the time-transformation using a first-order expansion in $\epsilon$.
\begin{theorem}
\label{Theorem:PE until first-order}
Let $\tau \in C^1(\R, \R_{+})$ be a bounded function in the form of \eqref{eq:tau with perturbation}, where $\tilde{\tau}$ is a periodic function with fundamental period $P \in \R_{+}$, and its Fourier series is given in \eqref{eq:tauFS}. We assume: (i) $\dot{\tau}(t) < 1$ for all $t \in \R$, (ii) $0 < \epsilon \ll \tau_0$, (iii) $a_0 = 0$, and (iv) $\omega\tau_0 \neq m\pi$ for all $m \in \N$. For a given $\tau^* \in \R_{+}$, one approximate time-transformation $h: [-\tau^*, \infty) \rightarrow \R$ that satisfies \eqref{eq:hcond1}-\eqref{eq:hcond2} in Theorem~\ref{Theorem:time-transformation} is given by:
\begin{align}
\label{eq:hFOExpansion}
\begin{split}
    h(\lambda) &= \left( \frac{\tau_0}{\tau^*} \right) \lambda + \epsilon \sum_{k = -\infty}^{\infty} b_k e^{jk\omega\left( \frac{\tau_0}{\tau^*} \right)\lambda}
    + O(\epsilon^2),
\end{split}
\end{align}
where
\begin{align}
\label{eq:bk}
&b_k = \frac{a_k}{1 - e^{-jk\omega\tau_0}}, \, \forall k \neq 0, \\
\label{eq:b0}
&b_0 = -\sum_{k \neq 0} \frac{a_k}{1 - e^{-jk\omega\tau_0}}.
\end{align}
\end{theorem}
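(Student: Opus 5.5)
The plan is a formal perturbative expansion: substitute the ansatz $h(\lambda) = h_0(\lambda) + \epsilon h_1(\lambda) + O(\epsilon^2)$ into the Abel equation \eqref{eq:hcond1}, expand in $\epsilon$, and solve order by order. The normalization \eqref{eq:hcond2} then fixes the remaining freedom.

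\textbf{Zeroth order.} With $\epsilon=0$ the delay is the constant $\tau_0$, and \eqref{eq:hcond1} reduces to $h_0(\lambda) - \tau_0 = h_0(\lambda-\tau^*)$. The natural linear solution with $h_0(0)=0$ is $h_0(\lambda) = c\lambda$ with $c = \tau_0/\tau^*$. This is strictly increasing, which matches Theorem~\ref{Theorem:time-transformation}.

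\textbf{First order.} Using $\tau(h(\lambda)) = \tau_0 + \epsilon\tilde\tau(c\lambda + O(\epsilon))$ and the fact that $\tilde\tau\in C^1$, I write $\tilde\tau(c\lambda+\epsilon h_1) = \tilde\tau(c\lambda) + O(\epsilon)$. Collecting the $O(\epsilon)$ terms gives the linear difference equation
\begin{align*}
h_1(\lambda) - h_1(\lambda-\tau^*) = \tilde\tau(c\lambda) = \sum_k a_k e^{jk\omega c\lambda}.
\end{align*}

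\textbf{Solving the difference equation.} I look for a periodic particular solution $h_1(\lambda)=\sum_k b_k e^{jk\omega c\lambda}$. Since $c\tau^* = \tau_0$, the shift acts termwise as
\[
e^{jk\omega c(\lambda-\tau^*)} = e^{-jk\omega\tau_0}\, e^{jk\omega c\lambda}.
\]
Matching coefficients gives $b_k(1-e^{-jk\omega\tau_0}) = a_k$ for $k\neq 0$. Hypothesis (iv) is meant to guarantee $1-e^{-jk\omega\tau_0}\neq 0$. For $k=0$ the equation reads $0\cdot b_0 = a_0$, which is solvable exactly because of hypothesis (iii), $a_0=0$; this leaves $b_0$ free. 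I would then impose $h(0)=0$ at order $\epsilon$, i.e.\ $\sum_k b_k = 0$, which yields \eqref{eq:b0}. The resulting $h$ is real because $a_{-k}=\bar a_k$ implies $b_{-k}=\bar b_k$. It is strictly increasing for small $\epsilon$, since $\dot h = c + O(\epsilon)$ once $\sum_k |k b_k| < \infty$. Finally, I would substitute back to confirm that the residual of \eqref{eq:hcond1} is $O(\epsilon^2)$, bounding the Taylor remainder by $\epsilon\|\dot{\tilde\tau}\|_\infty\,|\epsilon h_1|$.

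\textbf{Main obstacle.} The hard step is justifying the divisors. Condition (iv) as stated only excludes $\omega\tau_0 = m\pi$. Nonvanishing of $1-e^{-jk\omega\tau_0}$ for all $k$ actually requires $k\omega\tau_0\notin 2\pi\mathbb{Z}$, so I would interpret (iv) in that spirit, or note that it suffices for the finitely many harmonics present. In addition, small divisors could spoil convergence of $\sum_k b_k$, so I would either assume $\tilde\tau$ has finitely many harmonics or use decay of $a_k$ to dominate the divisors. Given that the theorem is stated only as an approximate, $O(\epsilon^2)$ result, I would present the argument at this formal level and flag the convergence issue as the point needing care.
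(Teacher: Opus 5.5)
Your proposal follows the paper's proof essentially step for step: the ansatz $h_0+\epsilon h_1$, the choice $h_0=(\tau_0/\tau^*)\lambda$, a periodic Fourier ansatz for $h_1$, coefficient matching, and $h_1(0)=0$ to fix $b_0$. It also adds worthwhile points the paper leaves implicit, namely the role of $a_0=0$ at $k=0$, realness, and a residual check. Your caveat about (iv) is well founded: the paper's claim that $e^{-j\omega\tau_0}\neq\pm1$ makes \eqref{eq:bk} well-posed only covers $|k|\le 2$ (for example, $\omega\tau_0=2\pi/3$ satisfies (iv) yet $1-e^{-3j\omega\tau_0}=0$), so in general one needs $k\omega\tau_0\notin 2\pi\mathbb{Z}$ for every $k$ with $a_k\neq 0$, plus enough decay of $a_k$ to control small divisors. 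For the same reason, (iv) does not ``suffice for the finitely many harmonics present'' unless those harmonics satisfy $|k|\le 2$.
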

\begin{proof} 
The time-transformation depends on the perturbation amplitude
$\epsilon$, denoted as $h_\epsilon$.  Use a series expansion
to express this transformation as follows:
\begin{align}
\label{eq:hseries1}
h_\epsilon(\lambda) = h_0(\lambda) + \epsilon h_1(\lambda) + O(\epsilon^2).
\end{align}
The $O$ notation in \eqref{eq:hFOExpansion} and \eqref{eq:hseries1} means that the remainder term can be bounded by $M(\lambda) \epsilon^2$ for some function $M$ and for $\epsilon$ sufficiently small.
The function $h_\epsilon$ must satisfy \eqref{eq:hcond1}-\eqref{eq:hcond2}. 

First, consider the condition in \eqref{eq:hcond2},
$h_\epsilon(0) = 0$. The expansion in
\eqref{eq:hseries1} should satisfy this condition for all sufficiently small $\epsilon>0$. This implies that $h_0(0) = 0$ and $h_1(0) = 0$.

Next, consider the condition in \eqref{eq:hcond1}:
\begin{align}
\label{eq:hepscond1}
h_\epsilon(\lambda) - \tau(h_\epsilon(\lambda)) = h_\epsilon(\lambda - \tau^*).
\end{align}
We can also use a series expansion for the delay term:
\begin{align}
\label{eq:tauseries1}
\begin{split}
    \tau(h_\epsilon(\lambda)) &= \tau_0 + \epsilon \tilde{\tau}(h_\epsilon(\lambda)) \\
    &= \tau_0 + \epsilon \tilde{\tau}(h_0(\lambda)) + O(\epsilon^2).
\end{split}
\end{align}
Substitute the series expansions given in \eqref{eq:hseries1}
and \eqref{eq:tauseries1} back into \eqref{eq:hepscond1}. Collecting all terms on the left side yields:
\begin{align}
\label{eq:hcollect1}
\begin{split}
    &\left(h_0(\lambda) - \tau_0 - h_0(\lambda - \tau^*) \right) \\
    &+ \epsilon \left( h_1(\lambda) - \tilde{\tau}(h_0(\lambda)) - h_1(\lambda - \tau^*) \right) 
    + O(\epsilon^2) = 0.
\end{split}
\end{align}
Again, this condition should be satisfied for all sufficiently small $\epsilon>0$. This implies two separate conditions based on the terms with zeroth and first-order dependence on $\epsilon$:
\begin{align}
\label{eq:h0cond}
& h_0(\lambda) - \tau_0 - h_0(\lambda - \tau^*)  = 0, \\
\label{eq:h1cond}
& h_1(\lambda) - \tilde{\tau}(h_0(\lambda)) - h_1(\lambda - \tau^*) = 0.
\end{align}
One function $h_0: [-\tau^*, \infty) \rightarrow \R$ satisfying $h_0(0)=0$ and \eqref{eq:h0cond} is $h_0(\lambda) = \left( \frac{\tau_0}{\tau^*} \right)\lambda$.
Substituting this into \eqref{eq:h1cond} yields:
\begin{align}
\label{eq:h1condB}
h_1(\lambda) - \tilde{\tau}\left(\left( \frac{\tau_0}{\tau^*} \right)\lambda\right) - h_1(\lambda - \tau^*) = 0.
\end{align}
Next, recall that $\tilde\tau(t)$ has a fundamental period
$P \in \R_{+}$ with Fourier series in \eqref{eq:tauFS}.
Hence $\tilde{\tau}(\left( \frac{\tau_0}{\tau^*} \right)\lambda)$ has a fundamental period, as a function of $\lambda$, given by $P\left( \frac{\tau^*}{\tau_0} \right)$. It is natural to suppose $h_1$ is also a periodic function 
with fundamental period $P\left( \frac{\tau^*}{\tau_0} \right)$. In this case, $h_1$ has a Fourier series representation:
\begin{align}
\label{eq:h1FS}
h_1(\lambda) = \sum_{k = -\infty}^{\infty} b_k e^{jk\omega\left( \frac{\tau_0}{\tau^*} \right)\lambda}.
\end{align}
Substitute the Fourier series for $h_1$ and $\tilde{\tau}$ into \eqref{eq:h1condB}. Comparing the Fourier coefficients yields: 
\begin{align}
\label{eq:FSrelation1}
b_k - a_k - b_k e^{-jk\omega\tau_0} = 0.
\end{align}
This can be solved for each Fourier coefficient $b_k$ with $k \neq 0$:
\begin{align}
\label{eq:bksol}
b_k = \frac{a_k}{1 - e^{-jk\omega\tau_0}}.
\end{align}
Note that this expression is well-posed due to the
assumption $\omega\tau_0 \neq m\pi$ for all $m \in \N$ (equivalently, $e^{-j\omega\tau_0} \neq \pm 1$).

Finally, the condition $h_1(0)=0$ can be used
to solve for the constant $b_0$ in the Fourier series of $h_1$:
\begin{align}
b_0 = -\sum_{k \neq 0} \frac{a_k}{1 - e^{-jk\omega\tau_0}}. 
\end{align}
\end{proof}

\subsection{Second-Order Expansion}
\label{PE:second-order}
Our second main result is an approximate solution for the time-transformation using a second-order expansion in $\epsilon$.
\begin{theorem}
\label{Theorem:PE until second-order}
Under the same assumptions on $\tau(t)$, $\epsilon$, $\tau_0$, $\tilde{\tau}(t)$, and $P$ as in Theorem~\ref{Theorem:PE until first-order}, and for a given $\tau^* \in \R_{+}$, one approximate time-transformation $h: [-\tau^*, \infty) \rightarrow \R$ that satisfies \eqref{eq:hcond1}-\eqref{eq:hcond2} in Theorem~\ref{Theorem:time-transformation} is given by
\begin{align}
\begin{split}
    h(\lambda) &= \left( \frac{\tau_0}{\tau^*} \right)\lambda + \epsilon \sum_{k = -\infty}^{\infty} b_k e^{jk\omega\left( \frac{\tau_0}{\tau^*} \right)\lambda} \\ 
    &\quad + \epsilon^2 \bigg(\left( \frac{m_0}{\tau^*} \right)\lambda + \sum_{k = -\infty}^{\infty} c_k e^{jk\omega\left( \frac{\tau_0}{\tau^*} \right)\lambda}\bigg) \\ 
    &\quad + O(\epsilon^3),
\end{split}
\end{align}
where $b_k$, $b_0$ are given in \eqref{eq:bk}, \eqref{eq:b0} and the remaining coefficients are given by:
\begin{align}
\label{eq:mk}
m_k &= \sum_{l = -\infty}^{\infty} jl\omega a_l b_{k-l}, \, \forall k,\\
\label{eq:ck}
c_k &= \frac{m_k}
           {1 - e^{-jk\omega\tau_0}} , \, \forall k \neq 0, \\
\label{eq:c0}
c_0 &= -\sum_{k \neq 0} \frac{m_k}{1 - e^{-jk\omega\tau_0}}. 
\end{align}
\end{theorem}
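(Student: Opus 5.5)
The plan is to repeat the argument of Theorem~\ref{Theorem:PE until first-order}, carrying the expansion one order further. I would write
\begin{align*}
h_\epsilon(\lambda) = h_0(\lambda) + \epsilon h_1(\lambda) + \epsilon^2 h_2(\lambda) + O(\epsilon^3).
\end{align*}
Here $h_0$ and $h_1$ are already fixed by the first-order result: $h_0(\lambda) = (\tau_0/\tau^*)\lambda$, and $h_1$ is the Fourier series with coefficients $b_k$ from \eqref{eq:bk}--\eqref{eq:b0}. Condition \eqref{eq:hcond2} must hold for all small $\epsilon$, which again forces $h_0(0)=h_1(0)=h_2(0)=0$.

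For \eqref{eq:hcond1}, I would Taylor expand the delay term to second order, using $\tilde\tau \in C^1$:
\begin{align*}
\tau(h_\epsilon(\lambda)) = \tau_0 + \epsilon\tilde\tau(h_0(\lambda)) + \epsilon^2 \dot{\tilde\tau}(h_0(\lambda))\, h_1(\lambda) + O(\epsilon^3).
\end{align*}
Substituting into \eqref{eq:hepscond1} and collecting powers of $\epsilon$ reproduces \eqref{eq:h0cond} and \eqref{eq:h1cond}. It also gives the new condition
\begin{align*}
h_2(\lambda) - h_2(\lambda-\tau^*) = \dot{\tilde\tau}\big((\tau_0/\tau^*)\lambda\big)\, h_1(\lambda).
\end{align*}
Differentiating \eqref{eq:tauFS} term by term gives $\dot{\tilde\tau}((\tau_0/\tau^*)\lambda)=\sum_l jl\omega a_l e^{jl\omega(\tau_0/\tau^*)\lambda}$. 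Multiplying by the series for $h_1$ and collecting like exponentials (a Cauchy product, i.e.\ a discrete convolution) shows that the right-hand side equals $\sum_k m_k e^{jk\omega(\tau_0/\tau^*)\lambda}$, with $m_k$ exactly as in \eqref{eq:mk}.

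The key structural step is the $k=0$ mode. The difference operator $f \mapsto f(\lambda)-f(\lambda-\tau^*)$ annihilates constants, so a purely periodic $h_2$ cannot match the constant $m_0$ on the right-hand side. I would therefore use the ansatz
\begin{align*}
h_2(\lambda) = (m_0/\tau^*)\lambda + \sum_k c_k e^{jk\omega(\tau_0/\tau^*)\lambda}.
\end{align*}
The linear part contributes exactly $m_0$ under the difference operator. A shift by $\tau^*$ multiplies the $k$-th mode by $e^{-jk\omega\tau_0}$, so matching the modes $k\neq 0$ gives $c_k(1-e^{-jk\omega\tau_0}) = m_k$, which is \eqref{eq:ck}. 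The linear term vanishes at $\lambda=0$, so the condition $h_2(0)=0$ reduces to $\sum_k c_k = 0$, which gives \eqref{eq:c0}. As in the first-order proof, $c_0$ is left free by the coefficient matching, so this normalization is what pins it down.

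I expect the main obstacle to be justifying the series manipulations rather than the algebra. Three points need care:
\begin{itemize}
\item the term-by-term differentiation of $\tilde\tau$;
\item convergence of the convolution defining $m_k$, which needs enough decay of $a_l$ (e.g.\ $\sum_l |l\,a_l|<\infty$) together with boundedness of the $b_k$;
\item the denominators $1-e^{-jk\omega\tau_0}$ staying away from zero.
\end{itemize}
On the last point, assumption (iv) only literally prevents vanishing of the denominator for $k=\pm1$. I would handle this as the first-order proof does: state it as the standing non-resonance assumption, i.e.\ read it as $1-e^{-jk\omega\tau_0}\neq 0$ for every $k\neq 0$ with $a_k\neq 0$ or $m_k\neq 0$. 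The expansion is only formal, so I would not attempt uniform-in-$\lambda$ bounds on the $O(\epsilon^3)$ remainder. The linear secular term already makes clear that these bounds depend on $\lambda$, consistent with the $M(\lambda)$ convention stated in the first-order proof.
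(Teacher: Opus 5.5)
Your proposal is correct and follows the paper's proof essentially step for step. The paper first isolates the secular term as $\hat h_2 := h_2-(m_0/\tau^*)\lambda$ and then expands $\hat h_2$ in Fourier series, which is your ansatz. Two small corrections to your side remarks. First, assumption (iv) is equivalent to $e^{-j\omega\tau_0}\neq\pm1$, so it rules out vanishing denominators for $k=\pm1$ \emph{and} $k=\pm2$ (exactly the modes present in the sinusoidal example), though you are right that harmonics with $|k|\ge 3$ of a general $\tilde\tau$ are not covered. Second, an $O(\epsilon^3)$ remainder in your Taylor step really requires $\tilde\tau\in C^2$, since $C^1$ only gives $o(\epsilon^2)$; the paper glosses over both of these points as well.
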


\begin{proof}
The proof is similar to that given for Theorem~\ref{Theorem:PE until first-order} except we retain terms up to $\epsilon^2$:
\begin{align}
\label{eq:hseries2}
h_\epsilon(\lambda) = h_0(\lambda) + \epsilon h_1(\lambda) + \epsilon^2 h_2(\lambda) + O(\epsilon^3).
\end{align}
The function $h_\epsilon$ must satisfy \eqref{eq:hcond1}-\eqref{eq:hcond2}. 

First, the expansion in \eqref{eq:hseries2} should satisfy the condition \eqref{eq:hcond2}, $h_\epsilon(0) = 0$, for all sufficiently small $\epsilon>0$.  This implies that $h_0(0) = 0$, $h_1(0) = 0$, and $h_2(0) = 0$.

Next, consider the condition in \eqref{eq:hcond1}:
\begin{align}
\label{eq:hepscond1rep}
h_\epsilon(\lambda) - \tau(h_\epsilon(\lambda)) = h_\epsilon(\lambda - \tau^*).
\end{align}
We can also use a series expansion for the delay term:
\begin{align}
\label{eq:tauseries2}
\tau(h_\epsilon(\lambda)) &= \tau_0 + \epsilon \tilde{\tau}(h_\epsilon(\lambda)) \\
\nonumber
&= \tau_0 + \epsilon \tilde{\tau}(h_0(\lambda)) + \epsilon^2 \tilde{\tau}'(h_0(\lambda)) \, h_1(\lambda) + O(\epsilon^3).
\end{align}
Substitute the series expansions given in \eqref{eq:hseries2}
and \eqref{eq:tauseries2} back into \eqref{eq:hepscond1rep}. Collecting all terms on the left side yields:
\begin{align}
\label{eq:hcollect2}
\begin{split}
    &\left( h_0(\lambda) - \tau_0 - h_0(\lambda - \tau^*) \right) \\
    &+ \epsilon\left( h_1(\lambda) - \tilde{\tau}(h_0(\lambda)) - h_1(\lambda - \tau^*) \right) \\ 
    &+ \epsilon^2\left( h_2(\lambda) - \tilde{\tau}'(h_0(\lambda)) \, h_1(\lambda) - h_2(\lambda - \tau^*) \right) \\
    &+ O(\epsilon^3) = 0.
\end{split}
\end{align}
This should be satisfied for all sufficiently small $\epsilon>0$. Use the same steps as in
the proof of Theorem~\ref{Theorem:PE until first-order} to solve for $h_0$ and $h_1$.  This yields 
$h_0(\lambda) = \left( \frac{\tau_0}{\tau^*} \right)\lambda$ and $h_1(\lambda)$ as in \eqref{eq:h1FS}.


The terms with second-order dependence on $\epsilon$ in 
\eqref{eq:hcollect2} yield one
additional condition:
\begin{align}
\label{eq:h2cond}
h_2(\lambda) - \tilde{\tau}'(h_0(\lambda)) \, h_1(\lambda) - h_2(\lambda - \tau^*) = 0.
\end{align}
To solve this, we need to simplify the second term on the left side. We have shown $h_0(\lambda) = \left( \frac{\tau_0}{\tau^*} \right)\lambda$. Hence, we can use the Fourier series of $\tilde{\tau}$ and $h_1$ (given by \eqref{eq:tauFS} and \eqref{eq:h1FS}, respectively) to show:
\begin{align}
\label{eq:multiplication}
    \tilde{\tau}'\left(\left( \frac{\tau_0}{\tau^*} \right)\lambda\right) h_1(\lambda) 
    = \sum_{k = -\infty}^{\infty} m_k e^{jk\omega\left( \frac{\tau_0}{\tau^*} \right)\lambda},    
\end{align}
where $m_k$ is defined in \eqref{eq:mk}. This term corresponds to a Fourier series with fundamental period, as a function of $\lambda$, given by $P\left( \frac{\tau^*}{\tau_0} \right)$. 
One important point is that $m_0\ne 0$ in general in this Fourier Series.   To account for this term,  define $\hat{h}_2(\lambda) := h_2(\lambda) - \left( \frac{m_0}{\tau^*} \right)\lambda$. Then, \eqref{eq:h2cond} is equivalent to 
\begin{align}
\label{eq:hath2cond}
\hat{h}_2(\lambda) + \left(m_0 - \sum_{k = -\infty}^{\infty} m_k e^{jk\omega\left( \frac{\tau_0}{\tau^*} \right)\lambda} \right)
= \hat{h}_2(\lambda - \tau^*).
\end{align}
It is again natural to suppose that $\hat{h}_2$ is also a periodic function with the same fundamental period.  In this case, $\hat{h}_2$ admits the Fourier series representation:
\begin{align}
\label{eq:hath2FS}
\hat{h}_2(\lambda) = \sum_{k = -\infty}^{\infty} c_k e^{jk\omega\left( \frac{\tau_0}{\tau^*} \right)\lambda}.
\end{align}
Substitute the Fourier series for $\hat{h}_2$ and the Fourier Series \eqref{eq:multiplication} into \eqref{eq:hath2cond}. Comparing the Fourier coefficients yields:
\begin{align}
\label{eq:FSrelation2}
\begin{split}
    &k = 0: c_0 + (m_0 - m_0) = c_0, \\
    &k \neq 0: c_k - m_k = c_k e^{-jk\omega\tau_0}.
\end{split}
\end{align}
The condition for $k=0$ is satisfied for any choice of $c_0$.
Solving the conditions for $k\ne 0$ gives:
\begin{align}
\label{eq:cksol}
c_k = \frac{m_k}{1 - e^{-jk\omega\tau_0}}.
\end{align}
Note that this expression is well-posed due to the
assumption $\omega\tau_0 \neq m\pi$ for all $m \in \N$ (equivalently, $e^{-j\omega\tau_0} \neq \pm 1$).

Finally, the condition $\hat{h}_2(0) = h_2(0) = 0$ can be used to solve for the remaining constant $c_0$. This yields:
\begin{align}
c_0 = -\sum_{k \neq 0} \frac{m_k}{1 - e^{-jk\omega\tau_0}}. 
\end{align}
Thus, the second-order term is $h_2(\lambda)=\left(\frac{m_0}{\tau^*}\right) \lambda + \hat{h}_2(\lambda)$ with $\hat{h}_2$
given by the Fourier series \eqref{eq:hath2FS}.
\end{proof}

\section{Pipeline for the Stability Analysis}
\label{sc:pipeline}
Consider the DDE \eqref{eq:DDEwithTVdelay} with delay $\tau(t)$ satisfying the conditions in Theorem~\ref{Theorem:time-transformation}. Apply a strictly increasing transformation $h$ to give a DDE \eqref{eq:transformed DDE} with a constant delay $\tau^*$. Assume there exists constants $0<h_l \le h_u < \infty$ such that:
\begin{align*}
h_l \leq \dot{h}(\lambda) \leq h_u, \, \forall \lambda \in [-\tau^*,\infty).
\end{align*}
Define the average and amplitude of these bounds by $\bar{h} := \frac{1}{2}(h_u + h_l)$ and $\gamma := \frac{1}{2}(h_u - h_l)$, respectively. The variation of $\dot{h}$ about the average bound is $\delta(\lambda) := \dot{h}(\lambda) - \bar{h}$. We can rewrite the transformed system \eqref{eq:transformed DDE} in terms of $\delta$ as follows:
\begin{align}
\begin{split}
\label{eq:transformed DDE rewrite}
    \dot{\bar{x}}(\lambda) &= \bar{h}\, A_0 \bar{x}(\lambda) + \bar{h} \, A_1 \bar{x}(\lambda - \tau^*) \\
    &\quad + \delta(\lambda) \, (A_0 \bar{x}(\lambda) + A_1 \bar{x}(\lambda - \tau^*)), \, \forall \lambda \geq 0, \\
    \bar{x}(\lambda) &= \bar{x}_0(\lambda), \, \forall \lambda \in [-\tau^*, 0].
\end{split}
\end{align}
We can recast this into a feedback interconnection $[G,\Delta$]:
\begin{align}
\label{eq:feedback}
\begin{split}
    &G \begin{cases}
        \dot{\bar{x}}(\lambda) = \bar{h} A_0 \bar{x}(\lambda) + \bar{h} A_1 \bar{x}(\lambda - \tau^*) 
        + v(\lambda), \\ 
        \bar{x}(\lambda) = \bar{x}_0(\lambda), \, \forall \lambda \in [-\tau^*, 0], \\
        w(\lambda) = A_0 \bar{x}(\lambda) + A_1 \bar{x}(\lambda - \tau^*),
    \end{cases}\\
        & \quad v(\lambda) = \Delta(w(\lambda)) := \delta(\lambda)\, w(\lambda).
\end{split}
\end{align}
The uncertainty is bounded by the amplitude of the bounds:
\begin{align*}
\|\delta\|_{\infty} \leq \underset{\lambda \geq -\tau^*}{\sup} |\dot{h}(\lambda) - \bar{h}| \leq \gamma.
\end{align*}

By Theorem~\ref{Theorem:time-transformation}, the stability of the feedback interconnection $[G, \Delta]$ is equivalent to the stability of the original DDE \eqref{eq:DDEwithTVdelay}. Consequently, \eqref{eq:feedback} enables the analysis of \eqref{eq:DDEwithTVdelay} using classical robust stability tools, such as IQCs \cite{megretski2002system}. Finite-dimensional approaches, such as discussed in \cite{pfifer2015integral}, are also applicable. However, in this section, we review a stability analysis method using the PIE representation \cite{peet2021representation}.

As shown in \cite{peet2021representation}, an equivalent ODE-PDE system of nominal system $G$ is given by:
\begin{align}
\nonumber
	&\dot{\bar{x}}(\lambda) = \bar{h}(A_0+A_1)\bar{x}(\lambda) - \bar{h} A_1 \int_{-1}^0 \partial_s \phi(\lambda, s)ds + v(\lambda), \\
\label{eq:ODE-PDE}
    &\partial_\lambda \phi (\lambda, s) = \frac{1}{\tau^*} \partial_s \phi (\lambda,s), \\
\nonumber
	&w(\lambda) = (A_0 + A_1 ) \bar{x}(\lambda) - A_1 \int_{-1}^0 \partial_s \phi(\lambda, s)ds, \\
\nonumber
    &\phi(\lambda, 0) = \bar{x}(\lambda), \, \bar{x}(0) = \bar{x}_0(0), \, \phi(0, s) = \bar{x}_0(\tau^*s),
\end{align}
where $\lambda \geq 0$ and $s \in [-1, 0]$.

We say $\mcl P$ is a Partial Integral (PI) operator on the Hilbert Space $\mbf Z^{n, m} := \R^n \times L_2^m[-1, 0]$ for given matrix $P$, bounded functions $Q_1, Q_2, R_0$, and separable functions $R_1, R_2$ such that 
\begin{align*}
\Biggl( \mcl P \bmtx P & Q_1 \\ Q_2 & \{R_i\}  \emtx \bmtx x \\ \mbf \Phi \emtx \Biggr)(s) = \bmtx Px + \int_{-1}^0 Q_1(s) \mbf \Phi(s) ds \\ Q_2(s)x + (\mathcal{P}_{\{R_i\}} \mathbf{\Phi})(s) \emtx,
\end{align*}
where
\begin{align*}
&(\mcl P_{\{R_i\}} \mbf \Phi)(s) := \\ 
&R_0(s) \mbf \Phi(s) + \int_{-1}^s R_1(s, \theta) \mbf \Phi(\theta) d\theta + \int_s^0 R_2(s, \theta) \mbf \Phi(\theta) d\theta.
\end{align*}
As discussed in \cite{peet2021representation}, the ODE-PDE system \eqref{eq:ODE-PDE} can be represented as PIE using PI operators as
\begin{align}
\label{eq:PIE}
\begin{split}
    \mcl T \mbf{\dot{\bar{x}}}(\lambda) &= \mcl A \mbf{\bar{x}}(\lambda) + \mcl B v(\lambda), \\
    w(\lambda) &= \mcl C \mbf{\bar{x}}(\lambda) + \mcl D v(\lambda),
\end{split}
\end{align}
where
\begin{align*}
&\bar{\mbf{x}}(\lambda) = \bmtx \bar{x}(\lambda) \\ \partial_s \phi(\lambda,\cdot) \emtx, \qquad \quad \; \; \mcl T = \mcl P \bmtx I & 0 \\ I & \{0, 0, -I\} \emtx, \\
&\mcl A = \mcl P \bmtx \bar{h} (A_0+ A_1) & -\bar{h} A_1 \\ 0 & \{\frac{1}{\tau^*}, 0, 0\}\emtx, \; \mcl B = \mcl P \bmtx I & \emptyset \\ 0 & \{\emptyset\} \emtx, \\
&\mcl C = \mcl P \bmtx A_0+ A_1 & -A_1 \\ \emptyset &  \{\emptyset\}\emtx, \qquad \quad \; \; \: \mcl D = \mcl P \bmtx 0 & \emptyset \\ \emptyset & \{\emptyset\} \emtx.
\end{align*}
Finally, the stability analysis can be performed by using Corollary 8 (Augmented KYP lemma) in \cite{talitckii2023integral} and PIETOOLS \cite{shivakumar2025pietools}.  We use the hard IQC defined by $\mcl K = \bmtx \gamma^2 \mcl P & \mcl R \\ \mcl R^* & -\mcl P \emtx$ and $\Psi = I$ since $\delta$ is a varying parameter with $\|\delta\|_{\infty} \leq \gamma$.

\section{Illustrative Examples}
\label{sc:example}

\subsection{Example With Sinusoidal Delays}
\label{sc:PEsindelay}

Consider a sinusoidal delay  $\tau \in C^1(\mathbb{R}, \mathbb{R}_{+})$ given by:
\begin{align}
\label{eq:tauex}
\tau(t) = \tau_0 + \epsilon\sin(\omega t).
\end{align}
We assume $\epsilon\omega < 1$, which ensures that $\dot{\tau}(t) < 1$ for all $t$. In addition, we assume  $0 < \epsilon \ll \tau_0$ 
and $\omega\tau_0 \neq m\pi$ for all $m \in \N$.  This ensures that
we can apply the perturbation theorems in Section~\ref{sc:PE}. The fundamental period is $P= \frac{2\pi}{\omega}$ and the Fourier series of the periodic term is
\begin{align*}
\tilde{\tau}(t) := \sin(\omega t) = \frac{1}{2j}e^{j\omega t} - \frac{1}{2j}e^{-j\omega t}.
\end{align*}
Thus, the Fourier coefficients of $\tilde{\tau}(t)$ are given by $a_1=\frac{1}{2j}$, $a_{-1}=-\frac{1}{2j}$, and $a_k=0$ for $k\ne \pm 1$.

Select $\tau^* = \tau_0$ for the constant delay in the transformed system.  This choice is the mean value of $\tau(t)$ and it simplifies the expressions in the first and second order expansions.

Theorem~\ref{Theorem:PE until first-order} gives the first-order expansion.  The coefficients in the expansion are given by:
\begin{align}
b_k = \begin{cases}
        \frac{1}{2j(1-e^{-j\omega\tau_0})} & \mbox{ for } k = 1, \\
        -\frac{1}{2j(1-e^{j\omega\tau_0})} & \mbox{ for } k = -1, \\
        -(b_{-1}+b_1) = \frac{\cot(\frac{\omega\tau_0}{2})}{2} & \mbox{ for } k = 0, \\
        0 & \mbox{ for } k \neq 0, \, \pm 1.
    \end{cases}
\end{align}
The first-order expansion, with these coefficients, can be simplified to the following expression:
\begin{align*}
    h(\lambda) &= \lambda + \epsilon \left(\frac{\cos(\frac{\omega\tau_0}{2}) - \cos(\omega\lambda + \frac{\omega\tau_0}{2})}{2\sin(\frac{\omega\tau_0}{2})}\right) + O(\epsilon^2). 
\end{align*}
The corresponding derivative is given by:
\begin{align*}
\dot{h}(\lambda) \approx 1 + \epsilon \left(\frac{\omega\sin(\omega\lambda + \frac{\omega\tau_0}{2})}{2\sin(\frac{\omega\tau_0}{2})}\right).
\end{align*}

Similarly, Theorem~\ref{Theorem:PE until second-order} can be used to  obtain a second-order expansion for the time-transformation:
\begin{align*}
\begin{split}
    h(\lambda) &= \lambda + \epsilon \left(\frac{\cos(\frac{\omega\tau_0}{2}) - \cos(\omega\lambda + \frac{\omega\tau_0}{2})}{2\sin(\frac{\omega\tau_0}{2})}\right) \\
    &\quad + \epsilon^2 \bigg(-\frac{\omega\cot(\frac{\omega\tau_0}{2})}{4\tau_0}\lambda \\
    &\qquad \qquad + \frac{\omega\left(\sin(\frac{3\omega\tau_0}{2}) - \sin(2\omega\lambda + \frac{3}{2}\omega\tau_0)\right)}{8\sin(\frac{\omega\tau_0}{2})\sin(\omega\tau_0)} \\ 
    &\qquad \qquad + \frac{\omega\cot(\frac{\omega\tau_0}{2})\left(\sin(\omega\lambda + \frac{\omega\tau_0}{2}) - \sin(\frac{\omega\tau_0}{2})\right)}{4\sin(\frac{\omega\tau_0}{2})} \bigg) \\
    &\quad + O(\epsilon^3).
\end{split}
\end{align*}
The corresponding derivative is given by:
\begin{align*}
\dot{h}(\lambda) &\approx 1 + \epsilon \left(\frac{\omega\sin(\omega\lambda + \frac{\omega\tau_0}{2})}{2\sin(\frac{\omega\tau_0}{2})}\right) \\ &\quad + \epsilon ^2 \bigg(-\frac{\omega\cot(\frac{\omega\tau_0}{2})}{4\tau_0} - \frac{\omega^2\cos(2\omega\lambda + \frac{3}{2}\omega\tau_0)}{4\sin(\frac{\omega\tau_0}{2})\sin(\omega\tau_0)} \\
&\qquad \qquad + \frac{\omega^2\cot(\frac{\omega\tau_0}{2})\cos(\omega\lambda + \frac{\omega\tau_0}{2})}{4\sin(\frac{\omega\tau_0}{2})} \bigg).
\end{align*}

Figures~\ref{fig:hdotApproxSmallEps} and \ref{fig:hdotApproxLargeEps}
show plots of $\dot{h}(\lambda)$ for the first- and second-order approximations, as well as the actual time transformation computed numerically with different parameter values.
Both approximations have a fundamental period of $P= \frac{2\pi}{\omega}$. The actual time transformation is numerically computed via \eqref{eq:construct} after determining the seed function that satisfies the conditions in \eqref{eq:condition on phi}. The seed function fits the second-order approximation over the interval $[-\tau^*, 0]$ and has the following form: 
\begin{align*}
\phi(\lambda) = p\lambda + q_0 + \sum_{k = 1}^3 \bigg(q_k\cos(k\omega\lambda) + r_k\sin(k\omega\lambda)\bigg).
\end{align*}
Figure~\ref{fig:hdotApproxSmallEps}  uses a small amplitude $\epsilon=0.01$ for the periodic term.  In this case, the first- and second-order approximations, as well as the actual time transformation are nearly identical. Figure~\ref{fig:hdotApproxLargeEps} uses a larger amplitude $\epsilon=0.1$ leading to more noticeable differences between the two approximations and the actual time transformation.

\begin{figure}[htbp]
\centering
\includegraphics[width = 0.47\textwidth]{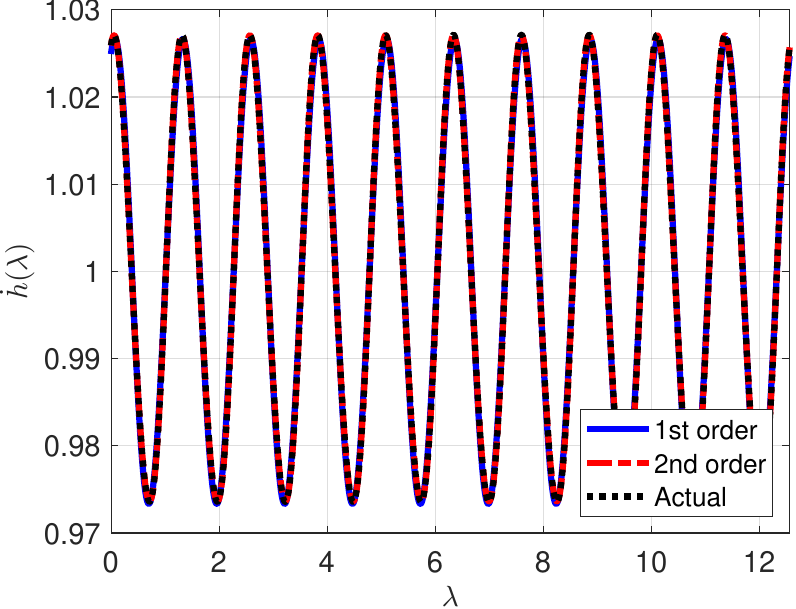}
\caption{First-order (solid blue) and second-order (dashed red) approximations of $\dot{h}(\lambda)$ with $\tau^* = \tau_0 = 3$, $\omega = 5$, and $\epsilon=0.01$. The numerical fit (dashed black) is also shown.}
\label{fig:hdotApproxSmallEps}
\end{figure}

\begin{figure}[htbp]
\vspace{3mm}
\centering
\includegraphics[width = 0.47\textwidth]{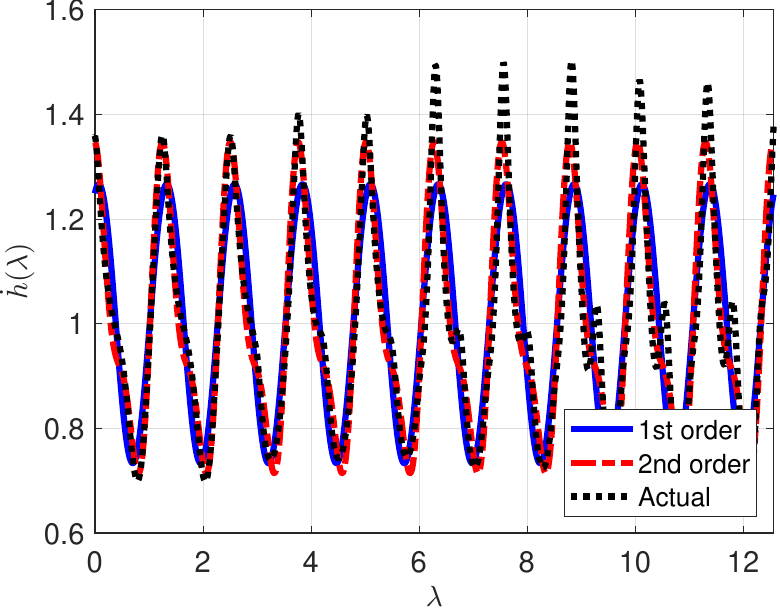}
\caption{First-order (solid blue) and second-order (dashed red) approximations of $\dot{h}(\lambda)$ with $\tau^* = \tau_0 = 3$, $\omega = 5$, and $\epsilon=0.1$. The numerical fit (dashed black) is also shown.}
\label{fig:hdotApproxLargeEps}
\end{figure}


We can assess the approximation error using the conditions
\eqref{eq:hcond1}-\eqref{eq:hcond2} required for a valid
time-transformation. Both the first- and second-order approximations for $h$ satisfy \eqref{eq:hcond2}. 
Thus, we only need to verify that $h$ satisfies \eqref{eq:hcond1} and that $h$ is strictly increasing.
Consider the case $\tau^* = \tau_0 = 3$ and $\omega = 5$. Then, $\dot{h}(\lambda) > 0$ for all $\lambda \geq -\tau^*$ holds for both the first- and second-order approximations if $\epsilon \in (0, 0.2)$. 
Next, we check the Abel equation \eqref{eq:hcond1}. If the approximation satisfies $\dot{h}(0) = \frac{\dot{h}(-\tau^*)}{1 - \dot{\tau}(0)}$ then we can use the seed propagation conditions \eqref{eq:condition on phi} and \eqref{eq:construct} to verify that $h$ is an actual time-transformation. Thus, we quantify the approximation error by $e := \big|\dot{h}(0)(1 - \dot{\tau}(0)) - \dot{h}(-\tau^*)\big|$, which measures the difference between the approximation and an actual time-transformation. Figure~\ref{fig:e vs eps} shows the dependence of the error $e$ on $\epsilon \in (0, 0.2)$. The error $e$ goes to zero quadratically and cubically for the first- and second-order approximations, respectively, as $\epsilon \to 0$. This behavior is consistent with the theoretical error bounds for the proposed expansions.



\begin{figure}[htbp]
\centering
\includegraphics[width = 0.47\textwidth]{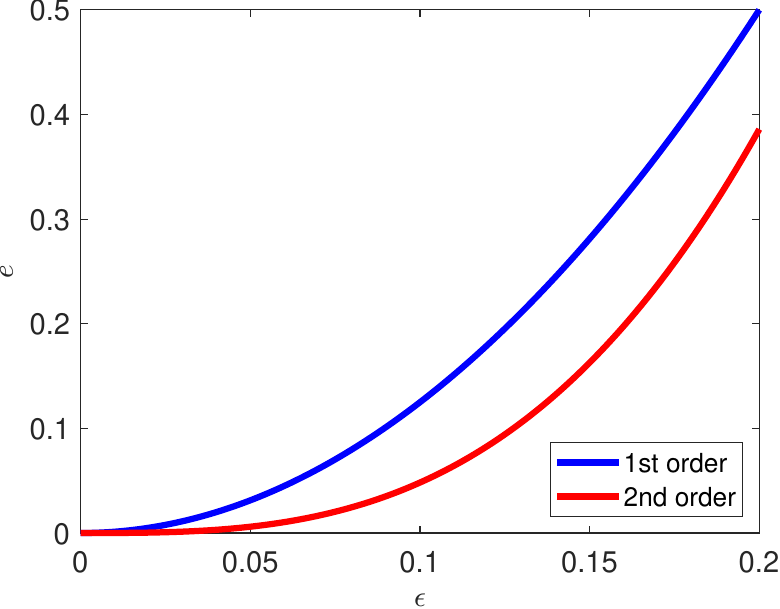}
\caption{Error $e$, the difference between the approximation and an actual time transformation, versus $\epsilon$.}
\label{fig:e vs eps}
\end{figure}

\subsection{Numerical Example}
\label{sec:stabnumex}

We consider the DDE \eqref{eq:DDEwithTVdelay} from \cite{gu2003stability} with:
\begin{align*}
A_0 &= \bmtx -2 & 0 \\ 0 & -0.9 \emtx
\mbox{ and } 
A_1 = \bmtx -1 & 0 \\ -1 & -1 \emtx.
\end{align*}
We again consider the sinusoidal delay given in 
Equation \eqref{eq:tauex} with the parameters 
$\tau_0 = 3$, $\omega = 5$, and $\epsilon \in (0, 0.2)$.
A second-order, time-transformation is constructed with
$\tau^* = \tau_0 = 3$. We use the analysis pipeline described in Section~\ref{sc:pipeline} with this time-transformation.  The DDE \eqref{eq:DDEwithTVdelay} is verified to be stable for $\epsilon = 0.01$, but the stability test fails for $\epsilon = 0.1$.

\section{Conclusion and Future work}
\label{sc:concfw}
This paper derived an explicit, although approximate, time-transformation for systems with a delay consisting of a constant plus small periodic term. We use a perturbative expansion to construct first- and second-order approximations for the time-transformation.
We provided a simple numerical example to illustrate the approach. We also demonstrate the use of this time-transformation to analyze stability of the system with this class of periodic delays. Future work will explore application of these approximations to stability and performance analysis.



\section*{Acknowledgment}
This material is based upon work supported by the National Science Foundation under Collaborative Grants No. 2337751 and 2337752. Any opinions, findings, and conclusions or recommendations expressed in this material are those of the author(s) and do not necessarily reflect the views of the National Science Foundation. 

\bibliographystyle{IEEEtran}
\bibliography{References}

\end{document}